\newtheorem{Thm}{Theorem}
\newtheorem{Lem}[Thm]{Lemma}
\newcommand{\bra}[1]{{\left\langle #1 \right|}}
\newcommand{\ket}[1]{{\left| #1 \right\rangle}}
\newcommand{\T}{\mbox{$\mathrm{tr}$}}
\newcommand{\N}{\mbox{$\mathcal{N}$}}
\newcommand{\I}{\mbox{$\mathcal{I}$}}
\newcommand{\E}{\mbox{$\mathcal{E}$}}
\begin{document}

\title{Separable states to distribute entanglement}

\author{Jeonghoon Park}\email{zucht@khu.ac.kr}
\affiliation{
 Department of Mathematics and Research Institute for Basic Sciences,
 Kyung Hee University, Seoul 130-701, Korea
}
\author{Soojoon Lee}\email{level@khu.ac.kr}
\affiliation{
 Department of Mathematics and Research Institute for Basic Sciences,
 Kyung Hee University, Seoul 130-701, Korea
}

\date{\today}

\begin{abstract}
It was shown that
two distant particles can be entangled
by sending a third particle never entangled with the other two
[T. S. Cubitt {\em et al.}, Phys. Rev. Lett. {\bf 91}, 037902 (2003)].
In this paper,
we investigate a class of three-qubit separable states to distribute entanglement
by the same way,
and calculate the maximal amount of entanglement
which two particles of separable states in the class can have
after applying the way.
\end{abstract}

\pacs{
03.65.Ud, 
03.67.Mn  
}
\maketitle

Entanglement is one of the most crucial resources
in quantum information processing.
Thus, in order to achieve quantum effects in information processing,
it is important to create entanglement between two or more particles.

It would seem that
the only way to entangle two particles is
either to apply global quantum operation
or
to send an ancillary particle entangled with one particle
to the other particle.
However, it was shown that
two particles can become entangled
without the ancillary particle ever becoming entangled~\cite{CVDC}.
In other words,
there exist a separable state to distribute entanglement.

In Ref.~\cite{CVDC}, the authors gave an explicit procedure,
which we call the Cubitt-Verstraete-D\"{u}r-Cirac (CVDC) procedure, as follows:
(i)~Alice and Bob prepare the (separable) state
$\sigma'_{abc}\equiv \mathrm{CNOT}_{ac}\rho'_{abc}\mathrm{CNOT}_{ac}^\dagger$,
where $\mathrm{CNOT}_{ac}$ is the controlled NOT (CNOT) operation
on particles $a$ as the control qubit and $c$ as the target qubit,
and
\begin{equation}
\rho'_{abc}\equiv
\frac{1}{3}\ket{\Psi^+_0}\bra{\Psi^+_0}
+\frac{1}{6}\sum_{j=1,3}(\ket{\Psi^+_j}\bra{\Psi^+_j}+\ket{\Psi^-_j}\bra{\Psi^-_j})
\label{eq:original}
\end{equation}
with $\ket{\Psi^{\pm}_j}=(\ket{j}\ket{0}\pm\ket{3-j}\ket{1})/\sqrt{2}$.
(ii)~Alice applies $\mathrm{CNOT}_{ac}$ to the state $\sigma'_{abc}$,
producing the state $\rho'_{abc}$,
and then sends the particle $c$ to Bob.
(iii)~Bob applies $\mathrm{CNOT}_{bc}$ to the state $\rho'_{abc}$,
resulting in the state
\[
\tau'_{abc}\equiv \frac{1}{3}\ket{\phi^+}_{ab}\bra{\phi^+}\otimes\ket{0}_c\bra{0}
+\frac{1}{6}\I_{ab}\otimes\ket{1}_c\bra{1},
\]
where $\ket{\phi^+}=(\ket{00}+\ket{11})/\sqrt{2}$.
(iv)~Bob measures the particle $c$ in the computational basis.
Then the maximally entangled state $\ket{\phi^+}_{ab}$
can be extracted between the particles $ab$ with probability $1/3$.

In the CVDC procedure,
the ancillary qubit $c$ always remains separable with the particles $ab$.
Nevertheless, after performing the procedure,
entanglement can be probabilistically distributed between the particles $ab$.
Replacing the step~(iv) by the following step~(iv$'$),
a deterministic CVDC procedure can be obtained:
(iv$'$) Bob applies a local completely-positive trace-preserving (CPTP) map to the particles $bc$,
defined by $\mathcal{E}_{bc}(\rho)=\sum_j O_{bc}^{(j)}\rho O_{bc}^{(j)\dagger}$
with Kraus operators
$O_{bc}^{(1)}=\I_b\otimes\ket{0}_c\bra{0}$,
$O_{bc}^{(2)}=\ket{0}_b\bra{0}\otimes\ket{1}_c\bra{1}$, and
$O_{bc}^{(3)}=\ket{0}_b\bra{1}\otimes\ket{1}_c\bra{1}$.
Tracing out the ancillary qubit $c$,
the resulting state $\T_c\left(\E_{bc}\left(\tau'_{abc}\right)\right)$
becomes entangled, since the state has non-positive partial transpose.

From the above procedures,
some separable state can probabilistically/deterministically distribute entanglement.
Then one could naturally ask
what kind of separable states can distribute entanglement
and how much entanglement can be obtained from those separable states
by the probabilistic/deterministic CVDC procedure.
In this paper, we examine a class of three-qubit separable states
to distribute entanglement between the two qubits
after performing the CVDC procedure,
and calculate the (maximal) amount of entanglement
distributed between the two qubits for three-qubit states in the class.

As seen in the above procedure,
the state $\rho'_{abc}$ in Eq.~(\ref{eq:original})
is clearly an element in the family of three-qubit states with four parameters
presented by D\"{u}r {\em et al.}~\cite{DCT},
\begin{equation}
\rho_{abc}=\sum_{\sigma=\pm}\lambda^\sigma_0\ket{\Psi^\sigma_0}\bra{\Psi^\sigma_0}
+\sum^3_{j=1}\lambda_j(\ket{\Psi^+_j}\bra{\Psi^+_j}+\ket{\Psi^-_j}\bra{\Psi^-_j}),
\label{eq:Dur_state}
\end{equation}
where $0\leq\lambda^\sigma_0,\lambda_j\leq1$,
$\lambda^+_0\geq\lambda^-_0,\lambda_1,\lambda_2,\lambda_3$,
and $\lambda_0^+ + \lambda_0^- + 2\sum_{j} \lambda_j =1$.
We now let the state $\rho'_{abc}$ be replaced by the states $\rho_{abc}$ in the class,
and let $\sigma_{abc}\equiv \mathrm{CNOT}_{ac}\rho_{abc}\mathrm{CNOT}_{ac}^\dagger$
and $\tau_{abc}\equiv \mathrm{CNOT}_{bc}\rho_{abc}\mathrm{CNOT}_{bc}^\dagger$.
In order to obtain a class of separable states to distribute entanglement
from the CVDC procedure,
the following four conditions are required:
\begin{itemize}
\item[(a)] $\sigma_{abc}$ is separable.
\item[(b)] $\T_c\left(\rho_{abc}\right)$ is separable.
\item[(c)] $\rho_{abc}$ and $\tau_{abc}$
are separable between two parties $ab$-$c$.
\item[(d)] Measuring the particle $c$ on the states $\sigma_{abc}$ and $\rho_{abc}$ in the computational basis,
the resulting states on the particles $ab$ are separable.
But, measuring the particle $c$ of $\tau_{abc}$ in the computational basis,
the resulting state on the particles $ab$ has entanglement with nonzero probability.
\end{itemize}

According to the above conditions,
we investigate three-qubit separable states in the family to distribute entanglement.
First we consider the condition for $\sigma_{abc}$.
It can be readily obtained that
\begin{equation}
\sigma_{abc}=
\sigma_{ab}^{(0)}\otimes{\ket{0}_c\bra{0}}
+\sigma_{ab}^{(1)}\otimes{\ket{1}_c\bra{1}},
\label{eq:sigma}
\end{equation}
where
\begin{equation*}
\sigma_{ab}^{(0)}=
\begin{pmatrix}
\delta/2 & 0 & 0 & \Delta/2 \\
0 & \lambda_1 & 0 & 0 \\
0 & 0 & \lambda_1 & 0 \\
\Delta/2 & 0 & 0 & \delta/2
\end{pmatrix}
\end{equation*}
with $\Delta\equiv\lambda^+_0-\lambda^-_0$ and $\delta \equiv {\lambda^+_0+\lambda^-_0}$,
and
\begin{equation*}
\sigma_{ab}^{(1)}=
\begin{pmatrix}
\lambda_3 & 0 & 0 & 0 \\
0 & \lambda_2 & 0 & 0 \\
0 & 0 & \lambda_2 & 0 \\
0 & 0 & 0 & \lambda_3
\end{pmatrix}.
\label{eq:sigma1}
\end{equation*}
Thus, by the condition~(d), $\sigma_{ab}^{(0)}$ should be separable,
and hence we obtain the inequality
\begin{equation}
2\lambda_1\ge \Delta.
\label{eq:lambda1}
\end{equation}
Then since $\sigma_{ab}^{(1)}$ is clearly separable,
the condition~(a) is also satisfied.

We now investigate the condition for $\rho_{abc}$.
Since the state $\T_c\left(\rho_{abc}\right)$ is trivially separable,
the condition~(b) naturally holds.
Furthermore, since it can be straightforwardly shown that
the resulting states on $ab$ after the measurement of the particle $c$ of $\rho_{abc}$
are separable,
the condition for $\rho_{abc}$ in~(d) is also satisfied.
However, by the condition~(c),
the partial transpose of the state $\rho_{abc}$ with respect to the particle $c$
must be positive, that is,
\begin{equation}
\rho_{abc}^{T_c}=
\begin{pmatrix}
\delta/2 & 0 & 0 & 0 & 0 & 0 & 0 & 0 \\
0 & \lambda_3 & 0 & 0 & 0 & 0 & \Delta/2 & 0 \\
0 & 0 & \lambda_1 & 0 & 0 & 0 & 0 & 0 \\
0 & 0 & 0 & \lambda_2 & 0 & 0 & 0 & 0 \\
0 & 0 & 0 & 0 & \lambda_2 & 0 & 0 & 0 \\
0 & 0 & 0 & 0 & 0 & \lambda_1 & 0 & 0 \\
0 & \Delta/2 & 0 & 0 & 0 & 0 & \lambda_3 & 0 \\
0 & 0 & 0 & 0 & 0 & 0 & 0 & \delta/2
\end{pmatrix}
\ge 0,
\label{eq:rho_Tc}
\end{equation}
and we thus obtain the inequality
\begin{equation}
2\lambda_3\ge \Delta.
\label{eq:lambda3}
\end{equation}

We finally consider the condition for $\tau_{abc}$.
It follows from direct calculations that
\begin{equation}
\tau_{abc}=
\tau_{ab}^{(0)}\otimes{\ket{0}_c\bra{0}}
+\tau_{ab}^{(1)}\otimes{\ket{1}_c\bra{1}},
\label{eq:tau}
\end{equation}
where $\tau_{ab}^{(0)}$ and $\tau_{ab}^{(1)}$ are
\begin{eqnarray}
\begin{pmatrix}
\delta/2 & 0 & 0 & \Delta/2 \\
0 & \lambda_2 & 0 & 0 \\
0 & 0 & \lambda_2 & 0 \\
\Delta/2 & 0 & 0 & \delta/2
\end{pmatrix}~\mathrm{and}~
\begin{pmatrix}
\lambda_3 & 0 & 0 & 0 \\
0 & \lambda_1 & 0 & 0 \\
0 & 0 & \lambda_1 & 0 \\
0 & 0 & 0 & \lambda_3
\end{pmatrix},
\label{eq:tau01}
\end{eqnarray}
respectively.
Then the condition~(c) clearly holds.
Since the state~$\tau_{ab}^{(1)}$ is separable,
the state $\tau_{ab}^{(0)}$ should be entangled by the condition~(d).
Hence we obtain the following inequality:
\begin{equation}
2\lambda_2< \Delta.
\label{eq:lambda2}
\end{equation}

By the inequalities~(\ref{eq:lambda1}), (\ref{eq:lambda3}), and (\ref{eq:lambda2}),
we can have the following theorem.
\begin{Thm}\label{Thm:lambdas}
Separable states $\sigma_{abc}$ in Eq.~(\ref{eq:sigma})
can (probabilistically) distribute entanglement
between the particles $ab$ by the CVDC procedure
if and only if $2\lambda_1\ge \Delta$, $2\lambda_2< \Delta$, and $2\lambda_3\ge \Delta$.
\end{Thm}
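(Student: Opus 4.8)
The plan is to read Theorem~\ref{Thm:lambdas} as an equivalence between the four operational requirements~(a)--(d) and the three scalar inequalities, and to prove each direction by reducing every separability or entanglement question to the Horodecki positive-partial-transpose (PPT) criterion, which is both necessary \emph{and} sufficient for two-qubit states. All of the matrices appearing in the analysis are either diagonal or contain a single nontrivial $2\times2$ block, so in each case PPT amounts to one explicit inequality.

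First I would dispatch necessity. If the CVDC procedure distributes entanglement then~(a)--(d) hold by definition. Measuring $c$ on $\sigma_{abc}$ yields the two-qubit states $\sigma_{ab}^{(0)}$ and $\sigma_{ab}^{(1)}$; condition~(d) forces $\sigma_{ab}^{(0)}$ to be separable, and the two-qubit PPT criterion applied to the block $\left(\begin{smallmatrix}\lambda_1 & \Delta/2\\ \Delta/2 & \lambda_1\end{smallmatrix}\right)$ of its partial transpose yields $2\lambda_1\ge\Delta$. Condition~(c) requires $\rho_{abc}^{T_c}\ge0$, and from Eq.~(\ref{eq:rho_Tc}) the only nontrivial constraint is positivity of $\left(\begin{smallmatrix}\lambda_3 & \Delta/2\\ \Delta/2 & \lambda_3\end{smallmatrix}\right)$, i.e.\ $2\lambda_3\ge\Delta$. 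Finally, measuring $c$ on $\tau_{abc}$ produces $\tau_{ab}^{(0)}$ and $\tau_{ab}^{(1)}$, and condition~(d) now demands that $\tau_{ab}^{(0)}$ be \emph{entangled}, so the same block analysis gives $2\lambda_2<\Delta$.

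For sufficiency I would check that the three inequalities reinstate all of~(a)--(d). The bound $2\lambda_1\ge\Delta$ makes $\sigma_{ab}^{(0)}$ PPT, hence separable, while $\sigma_{ab}^{(1)}$ is diagonal; this gives separability of $\sigma_{abc}$ (condition~(a)) and of the post-measurement $ab$ states (the $\sigma$-part of~(d)). Condition~(b) and the $\rho$-part of~(d) are immediate, since $\T_c(\rho_{abc})$ and the states obtained by measuring $c$ on $\rho_{abc}$ are diagonal in the computational basis of $ab$. For condition~(c), $\tau_{abc}$ is classical on $c$ and hence trivially $ab$-$c$ separable, and the bound $2\lambda_2<\Delta$ makes $\tau_{ab}^{(0)}$ non-PPT and therefore entangled, so the outcome $\ket{0}_c$, occurring with the positive probability $\delta+2\lambda_2$, leaves $ab$ entangled and completes~(d).

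The step I expect to be the main obstacle is the remaining half of condition~(c): showing that $2\lambda_3\ge\Delta$ makes $\rho_{abc}$ not merely PPT but genuinely \emph{separable} across the $ab$-$c$ cut. Because $ab$-$c$ is a $4\times2$ system, PPT is a priori only necessary, so this does not follow from the Horodecki theorem directly. The way around it is to exploit the structure of $\rho_{abc}$: the cross terms of $\ket{\Psi^+_j}\bra{\Psi^+_j}+\ket{\Psi^-_j}\bra{\Psi^-_j}$ cancel for $j=1,2,3$, leaving diagonal contributions, so $\rho_{abc}$ splits into two orthogonal blocks, one supported on $\mathrm{span}\{\ket{000},\ket{001},\ket{110},\ket{111}\}$ and one on its complement. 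The complementary block is diagonal, hence separable; the first block, after identifying $\{\ket{00},\ket{11}\}_{ab}$ with a logical qubit, becomes an effective two-qubit state of exactly the form of $\sigma_{ab}^{(0)}$ with $\lambda_1$ replaced by $\lambda_3$. The two-qubit PPT criterion then applies on the nose, so $2\lambda_3\ge\Delta$ is equivalent to genuine $ab$-$c$ separability. Combining this with the preceding reductions shows that~(a)--(d) hold if and only if the three inequalities hold, which proves the theorem.
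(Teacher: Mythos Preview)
Your argument is correct and follows the same overall plan as the paper: reduce each of the four operational conditions~(a)--(d) to a PPT condition on an explicit $2\times2$ block, obtaining exactly the three inequalities $2\lambda_1\ge\Delta$, $2\lambda_2<\Delta$, $2\lambda_3\ge\Delta$.

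The one place where you diverge from the paper is the step you correctly flagged as the obstacle: proving that $2\lambda_3\ge\Delta$ gives genuine $ab$-$c$ \emph{separability} of $\rho_{abc}$, not merely PPT, on a $4\times2$ system. The paper handles this by citation, invoking the result of D\"ur and Cirac~\cite{DCT} that for the family in Eq.~(\ref{eq:Dur_state}) positivity of the partial transpose across any bipartite cut is equivalent to separability across that cut. Your route is instead self-contained: you observe that the cross terms in $\ket{\Psi_j^+}\bra{\Psi_j^+}+\ket{\Psi_j^-}\bra{\Psi_j^-}$ cancel for $j=1,2,3$, so $\rho_{abc}$ splits into a diagonal block on $\mathrm{span}\{\ket{010},\ket{011},\ket{100},\ket{101}\}$ and a block on $\mathrm{span}\{\ket{000},\ket{001},\ket{110},\ket{111}\}$; identifying $\{\ket{00}_{ab},\ket{11}_{ab}\}$ with a logical qubit turns the latter into an honest two-qubit state of the same shape as $\sigma_{ab}^{(0)}$ with $\lambda_1\mapsto\lambda_3$, to which the Horodecki criterion applies directly. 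This is a clean, elementary substitute for the cited result and in fact reproves the relevant special case of~\cite{DCT}; the paper's version is shorter but less self-contained.
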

We note that the converse statement of Theorem~\ref{Thm:lambdas} is true
since $\sigma_{ab}^{(0)}$ and $\tau_{ab}^{(0)}$ are two-qubit states,
and the positive partial transpose of $\rho_{abc}$ implies its separability~\cite{DCT}.

We remark that
the state $\T_c\left(\tau_{abc}\right)$ is not entangled.
However, as in the deterministic CVDC procedure,
applying the same local CPTP map $\E_{bc}$
to the particles $bc$ of the state $\tau_{abc}$,
the final state $\T_c\left(\E_{bc}\left(\tau_{abc}\right)\right)$
can be entangled if $4\lambda_2\left(\lambda_1+\lambda_2+\lambda_3\right)<\Delta^2$,
since the state $\T_c\left(\E_{bc}\left(\tau_{abc}\right)\right)$ is
\begin{equation}
\begin{pmatrix}
\delta/2+\lambda_1+\lambda_3 & 0 &  0 & \Delta/2\\
0  & \lambda_2 & 0 & 0 \\
0 & 0 & \lambda_1+\lambda_2+\lambda_3 & 0 \\
\Delta/2 & 0 & 0 & \delta/{2}
\end{pmatrix}.
\label{eq:TcEbc_tau}
\end{equation}

Let $\mathcal{S}$ be a class of three-qubit separable states of the form $\sigma_{abc}$ in Eq.~(\ref{eq:sigma})
satisfying $2\lambda_1\ge \Delta$, $2\lambda_2< \Delta$, and $2\lambda_3\ge \Delta$.
Then all separable states in the class $\mathcal{S}$ can probabilistically distribute entanglement
by the CVDC procedure,
and the following bounds on the parameter $\Delta$
for separable states $\sigma_{abc}$ in $\mathcal{S}$
can be obtained from Theorem~\ref{Thm:lambdas}.
\begin{Thm}\label{Thm:Delta}
Let $\sigma_{abc}$ be a three-qubit separable state in $\mathcal{S}$.
Then the parameter $\Delta$ of $\sigma_{abc}$ satisfies the inequality $0<\Delta\le 1/3$.
Furthermore, for each $0<\Delta\le 1/3$,
there exists a state 
in $\mathcal{S}$ with the parameter $\Delta$.
This implies that
the inequality provides us with the tight bounds on the parameter $\Delta$.
\end{Thm}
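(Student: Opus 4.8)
The plan is to split the argument into two independent parts: first establishing the two-sided bound $0<\Delta\le 1/3$ for every state in $\mathcal{S}$, and then exhibiting, for each admissible value, an explicit state in $\mathcal{S}$ realizing it.

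For the lower bound, I would simply combine the defining inequality $2\lambda_2<\Delta$ with the non-negativity $\lambda_2\ge 0$ of the eigenvalue, which forces $\Delta>2\lambda_2\ge 0$. For the upper bound, the key observation is that the four quantities $\delta$, $2\lambda_1$, $2\lambda_3$ and $2\lambda_2$ each admit a lower bound in terms of $\Delta$: the non-negativity $\lambda_0^-\ge 0$ gives $\delta-\Delta=2\lambda_0^-\ge 0$, i.e. $\delta\ge\Delta$; the two conditions from Theorem~\ref{Thm:lambdas} give $2\lambda_1\ge\Delta$ and $2\lambda_3\ge\Delta$; and trivially $2\lambda_2\ge 0$. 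Summing these four and invoking the normalization $\delta+2(\lambda_1+\lambda_2+\lambda_3)=1$ built into the family yields $1\ge 3\Delta$, whence $\Delta\le 1/3$.

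For tightness, I would construct an explicit one-parameter family. Fixing $0<\Delta\le 1/3$, set $\lambda_2=0$ (which satisfies $2\lambda_2<\Delta$ since $\Delta>0$), take $\lambda_1=\lambda_3=\Delta/2$ so that the conditions $2\lambda_1\ge\Delta$ and $2\lambda_3\ge\Delta$ hold with equality, and absorb the remaining weight into $\lambda_0^{\pm}$ by choosing $\lambda_0^+=(1-\Delta)/2$ and $\lambda_0^-=(1-3\Delta)/2$. One then checks that the normalization holds and that $\lambda_0^-\ge 0$ is equivalent to $\Delta\le 1/3$, so the state lies in the admissible family; by construction its parameter equals $\lambda_0^+-\lambda_0^-=\Delta$.

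The main obstacle is not the bounds themselves but respecting the ordering constraint $\lambda_0^+\ge\lambda_1,\lambda_2,\lambda_3,\lambda_0^-$ of the D\"ur family. A naive attempt, for instance setting $\lambda_0^-=0$, satisfies the three inequalities of Theorem~\ref{Thm:lambdas} but fails this ordering for small $\Delta$, since $\lambda_1$ and $\lambda_3$ cannot then be made large enough to meet the normalization while staying below $\lambda_0^+=\Delta$. The construction above circumvents this by keeping $\lambda_0^+=(1-\Delta)/2$ comfortably above $\Delta/2$, so I would finally verify $\lambda_0^+\ge\lambda_j$ for each $j$; in particular $(1-\Delta)/2\ge\Delta/2$ reduces to $\Delta\le 1/2$, which holds throughout the range, completing the proof of tightness.
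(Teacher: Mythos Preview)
Your proof is correct and follows essentially the same route as the paper: the lower bound via $2\lambda_2<\Delta$, the upper bound by summing $\delta\ge\Delta$, $2\lambda_1\ge\Delta$, $2\lambda_3\ge\Delta$, $2\lambda_2\ge0$ against the normalization, and the identical tightness construction $\lambda_0^+=(1-\Delta)/2$, $\lambda_0^-=(1-3\Delta)/2$, $\lambda_1=\lambda_3=\Delta/2$, $\lambda_2=0$. Your explicit verification of the ordering constraint $\lambda_0^+\ge\lambda_j$ is a welcome addition that the paper leaves implicit.
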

\begin{proof}
Since $\sigma_{abc}\in \mathcal{S}$, it follows from Theorem~\ref{Thm:lambdas}
that $2\lambda_1\ge \Delta$, $2\lambda_2< \Delta$, and $2\lambda_3\ge \Delta$.
Thus, $\Delta$ must be nonzero, and
\begin{eqnarray}
\Delta&\le&\frac{1}{3}\left(\lambda^+_0+2\lambda_1+2\lambda_3\right)\nonumber\\
&\le& \frac{1}{3}\left(\lambda_0^+ + \lambda_0^- + 2\sum_{j} \lambda_j \right)=\frac{1}{3}.
\label{eq:Delta_ineq}
\end{eqnarray}
We now let $\Delta$ be a real number such that $0<\Delta\le 1/3$.
Then taking $\lambda^+_0=(1-\Delta)/2$, $\lambda^-_0=\lambda^+_0-\Delta$,
$\lambda_1=\lambda_3=\Delta/2$, and $\lambda_2=0$,
the state $\sigma_{abc}$ becomes
\begin{equation}
\sigma_{\Delta}\equiv
\sigma_{\Delta}^{(0)}\otimes{\ket{0}\bra{0}}
+\sigma_{\Delta}^{(1)}\otimes{\ket{1}\bra{1}},
\label{eq:sigma_Delta}
\end{equation}
where
\begin{equation*}
\sigma_{\Delta}^{(0)}=
\begin{pmatrix}
(1-3\Delta)/4 & 0 & 0 & \Delta/2 \\
0 & \Delta/2 & 0 & 0 \\
0 & 0 & \Delta/2 & 0 \\
\Delta/2 & 0 & 0 & (1-3\Delta)/4
\end{pmatrix}
\end{equation*}
and
\begin{equation*}
\sigma_{\Delta}^{(1)}=
\begin{pmatrix}
\Delta/2 & 0 & 0 & 0 \\
0 & 0 & 0 & 0 \\
0 & 0 & 0 & 0 \\
0 & 0 & 0 & \Delta/2
\end{pmatrix}.
\end{equation*}
By Theorem~\ref{Thm:lambdas},
the state $\sigma_{\Delta}$ clearly becomes an element in $\mathcal{S}$.
Hence, there exists a three-qubit state in $\mathcal{S}$ with the parameter $\Delta$.
\end{proof}
We consider the state $\sigma_{abc}\in \mathcal{S}$ in Eq.~(\ref{eq:sigma}) with parameter $\Delta=1/3$.
Then since $2\lambda_1\ge\Delta=1/3$ and $2\lambda_3\ge\Delta=1/3$,
\begin{equation}
1=2\lambda_0^+ -\Delta+2\sum_{j=1}^3\lambda_j\ge 2\lambda_0^+ +2\lambda_2+\frac{1}{3}.
\label{eq:trace_one}
\end{equation}
This implies the inequality $1/3\ge \lambda_0^+ +\lambda_2$.
Since
\[
\lambda_0^+ +\lambda_2\ge \lambda_0^+=\Delta + \lambda_0^- = \frac{1}{3} + \lambda_0^-\ge \frac{1}{3},
\]
we have $\lambda_0^- = 0$, and hence $\lambda_0^+=\Delta=1/3$.
Then, from the inequality~(\ref{eq:trace_one}),
we can obtain the inequality $1\ge 1+ 2\lambda_2$, and hence $\lambda_2$ becomes zero.
By the left-hand side of the inequality~(\ref{eq:trace_one}),
we readily obtain that $\lambda_1=\lambda_3=1/6$.
Therefore, we can see that the state $\sigma_{abc}\in \mathcal{S}$ with $\Delta=1/3$
uniquely becomes $\sigma'_{abc}$ in the original CVDC procedure.

For the state $\sigma_\Delta$ with one parameter $\Delta$ in Eq.~(\ref{eq:sigma_Delta}),
let
$\tau_{\Delta}\equiv
\mathrm{CNOT}_{bc}\mathrm{CNOT}_{ac}\sigma_{\Delta}\mathrm{CNOT}_{ac}^\dagger\mathrm{CNOT}_{bc}^\dagger$.
Then we note that
the state $\T_c\left(\E_{bc}\left(\tau_{\Delta}\right)\right)$ becomes entangled,
and hence the one-parameter state $\sigma_\Delta$
can deterministically distribute entanglement between two particles $ab$
by the CVDC procedure with the local CPTP map $\E_{bc}$.

We now calculate the amount of entanglement on the particles $ab$
attainable by the CVDC procedure.
Let $p_e$ be the probability to obtain nonzero entanglement
after the measurement of the particle $c$ in the state $\tau_{abc}$.
Then we have $p_e=\T(\tau_{ab}^{(0)})=\delta+2\lambda_2=1-2\lambda_1-2\lambda_3$.
Let $\N$ be the negativity~\cite{LKPL,VidalW,LCOK} as an entanglement measure defined by
\begin{equation}
\N(\rho)\equiv \left\|\rho^{T_a}\right\|-1
\end{equation}
for a two-qubit state $\rho$,
where $\|\cdot\|$ is the trace norm.
Then we can readily obtain $\N(\tau_{ab}^{(0)})=\Delta-2\lambda_2$.
Thus, the average amount of entanglement obtainable from the state $\tau_{abc}$
after performing the measurement of the particle $c$ is
\begin{equation}
p_e \cdot \N(\tau_{ab}^{(0)}/p_e)=\N(\tau_{ab}^{(0)})=\Delta-2\lambda_2\le\Delta,
\label{eq:average_entanglement}
\end{equation}
where $\Delta$ is the average amount of entanglement obtainable from the state $\tau_{\Delta}$
after measuring the particle $c$.
Hence we obtain the following theorem.
\begin{Thm}\label{Thm:sigma_Delta}
For each $0<\Delta\le 1/3$,
the state $\sigma_\Delta$ has the maximal average amount of entanglement
among all states in $\mathcal{S}$ with the parameter $\Delta$.
\end{Thm}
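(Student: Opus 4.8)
The plan is to leverage the explicit formula for the average distributed entanglement derived just above in Eq.~(\ref{eq:average_entanglement}), which already does almost all of the work. First I would recall that for an arbitrary state $\sigma_{abc}\in\mathcal{S}$ with parameter $\Delta$, the average amount of entanglement obtained on the particles $ab$ after measuring the particle $c$ of the corresponding $\tau_{abc}$ equals $\N(\tau_{ab}^{(0)})=\Delta-2\lambda_2$. Since every admissible state in $\mathcal{S}$ satisfies $\lambda_2\ge 0$, this quantity is bounded above by $\Delta$, with the bound depending only on the fixed parameter $\Delta$ and not on the remaining free parameters $\lambda_0^\pm,\lambda_1,\lambda_3$.

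Next I would identify the states in $\mathcal{S}$ that saturate this bound. The inequality $\Delta-2\lambda_2\le\Delta$ becomes an equality precisely when $\lambda_2=0$. By construction in the proof of Theorem~\ref{Thm:Delta}, the one-parameter state $\sigma_\Delta$ is defined with $\lambda_2=0$ (together with $\lambda_1=\lambda_3=\Delta/2$), and it was already verified there that $\sigma_\Delta\in\mathcal{S}$ and that it carries the parameter $\Delta$. Hence its average distributed entanglement equals exactly $\Delta$, the largest value attainable by any state in $\mathcal{S}$ sharing the same $\Delta$. Combining the two observations yields the claim that $\sigma_\Delta$ is maximal.

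Because the essential inequality is already established, I do not expect a genuine obstacle here; the only point requiring care is to ensure that the comparison is made within the correct family, i.e.\ among states that all share the \emph{same} value of $\Delta$ (so that $\Delta$ is a constant rather than a variable in the optimization), and to confirm that $\lambda_2=0$ is indeed compatible with membership in $\mathcal{S}$. The latter is guaranteed by Theorem~\ref{Thm:lambdas}, since the constraints $2\lambda_1\ge\Delta$, $2\lambda_3\ge\Delta$, and $2\lambda_2<\Delta$ impose no positive lower bound on $\lambda_2$, so that the minimizing choice $\lambda_2=0$ lies in the feasible region.
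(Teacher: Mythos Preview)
Your proposal is correct and follows essentially the same approach as the paper: the paper derives $\N(\tau_{ab}^{(0)})=\Delta-2\lambda_2\le\Delta$ in Eq.~(\ref{eq:average_entanglement}) and observes that $\sigma_\Delta$ (having $\lambda_2=0$) attains the bound, then states the theorem as an immediate consequence. Your added remarks about verifying $\lambda_2=0$ is feasible in $\mathcal{S}$ and that $\Delta$ is held fixed in the optimization are reasonable clarifications but do not depart from the paper's argument.
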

By Theorem~\ref{Thm:sigma_Delta}, 
it is known that the maximal amount of entanglement
after applying the CVDC procedure is $1/3$ when $\Delta=1/3$,
which is the original case.

We finally consider the state $\T_c\left(\E_{bc}\left(\tau_{abc}\right)\right)$.
Then its negativity value can be straightforwardly obtained as follows:
\begin{equation}
\max\left\{\sqrt{(\lambda_1+\lambda_3)^2+\Delta^2}-(\lambda_1+2\lambda_2+\lambda_3),0\right\}.
\label{eq:final_negativity}
\end{equation}
In the case of the deterministic CVDC procedure,
we have a result similar to Theorem~\ref{Thm:sigma_Delta}.
In order to prove the result,
we begin with the following simple lemma.
\begin{Lem}\label{Lem:sigma_Delta}
Let $\Delta$ be a fixed number such that $0<\Delta\le1/3$,
and let $f$ be a function on a closed interval $[\Delta, (1-\Delta)/2]$
defined by $f(x)\equiv \sqrt{x^2+\Delta^2}-x$.
Then the function $f$ has the maximal value $(\sqrt{2}-1)\Delta$ at $x=\Delta$.
\end{Lem}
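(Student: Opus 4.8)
The plan is to show that $f$ is strictly decreasing on the entire interval, which forces its maximum to the left endpoint $x=\Delta$, and then to evaluate $f(\Delta)$ directly. Before anything else I would check that the interval $[\Delta,(1-\Delta)/2]$ is nonempty: the inequality $\Delta\le(1-\Delta)/2$ is equivalent to $3\Delta\le1$, which holds precisely because $\Delta\le1/3$. This is the only place where the hypothesis on $\Delta$ is actually used, so the statement is meaningful for every admissible $\Delta$.

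The key step is the monotonicity of $f$. Rather than differentiate, I would rationalize the expression: multiplying and dividing by $\sqrt{x^2+\Delta^2}+x$ gives $f(x)=\Delta^2/(\sqrt{x^2+\Delta^2}+x)$. Since $\Delta>0$ is fixed and the denominator $\sqrt{x^2+\Delta^2}+x$ is a sum of two strictly increasing nonnegative functions of $x$ on $[\Delta,(1-\Delta)/2]$, it is itself strictly increasing and strictly positive there. Hence $f$, being a positive constant divided by a strictly increasing positive quantity, is strictly decreasing on the interval. Equivalently, one could observe that $f'(x)=x/\sqrt{x^2+\Delta^2}-1<0$ because $\sqrt{x^2+\Delta^2}>x\ge0$; either argument gives the same conclusion.

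It then follows immediately that $f$ attains its maximum at the left endpoint $x=\Delta$. Evaluating there, $f(\Delta)=\Delta^2/(\sqrt{2\Delta^2}+\Delta)=\Delta/(\sqrt{2}+1)=(\sqrt{2}-1)\Delta$, which is exactly the claimed value. I expect no genuine obstacle in this argument; the lemma is elementary, and the only point deserving care is the verification that the interval is well-defined, which is what ties the conclusion to the constraint $\Delta\le1/3$.
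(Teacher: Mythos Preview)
Your proof is correct and follows essentially the same route as the paper: both arguments rest on the absence of interior critical points (equivalently, the strict monotonicity of $f$), so the maximum sits at an endpoint. Your version is a touch more streamlined, since establishing $f$ is strictly decreasing immediately singles out $x=\Delta$, whereas the paper only notes ``no critical points'' and then separately checks $f(\Delta)\ge f((1-\Delta)/2)$.
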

The above lemma is trivial,
since the function $f$ has no critical points in its domain,
and
\[
f(\Delta)=(\sqrt{2}-1)\Delta\ge f((1-\Delta)/2).
\]
Since for each state $\sigma_\Delta$,
the value in Eq.~(\ref{eq:final_negativity}) becomes {$(\sqrt{2}-1)\Delta$},
we clearly have the following theorem.
\begin{Thm}\label{Thm:sigma_Delta2}
For each $0<\Delta\le 1/3$,
the state $\sigma_\Delta$ has the maximal amount of entanglement
among all states in $\mathcal{S}$ with the parameter $\Delta$,
after performing the deterministic CVDC procedure.
\end{Thm}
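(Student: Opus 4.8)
The plan is to reduce the maximization of the final-state negativity in Eq.~(\ref{eq:final_negativity}) to a direct application of Lemma~\ref{Lem:sigma_Delta}. First I would introduce the single variable $x\equiv\lambda_1+\lambda_3$ and rewrite the negativity of $\T_c\left(\E_{bc}\left(\tau_{abc}\right)\right)$ as $\max\{f(x)-2\lambda_2,\,0\}$, where $f(x)=\sqrt{x^2+\Delta^2}-x$ is exactly the function appearing in Lemma~\ref{Lem:sigma_Delta}. This rewriting exposes the essential structure: for fixed $\Delta$, the negativity depends on the remaining parameters only through $x$ and $\lambda_2$, and $\lambda_2$ enters solely as a nonnegative penalty term $-2\lambda_2$.

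Next I would determine the feasible range of $x$ for states in $\mathcal{S}$ with a fixed $\Delta$. From Theorem~\ref{Thm:lambdas} the inequalities $2\lambda_1\ge\Delta$ and $2\lambda_3\ge\Delta$ immediately give the lower bound $x\ge\Delta$. For the upper bound, I would combine the normalization $\delta+2\left(\lambda_1+\lambda_2+\lambda_3\right)=1$ with the positivity constraint $\lambda_0^-\ge0$ (equivalently $\delta\ge\Delta$) together with $\lambda_2\ge0$ to deduce $x\le(1-\Delta)/2$. Thus $x$ ranges over precisely the interval $[\Delta,(1-\Delta)/2]$ that is the domain of $f$ in Lemma~\ref{Lem:sigma_Delta}.

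The conclusion is then immediate. By Lemma~\ref{Lem:sigma_Delta}, $f(x)\le(\sqrt{2}-1)\Delta$ on this interval, and since $\lambda_2\ge0$ we have $f(x)-2\lambda_2\le(\sqrt{2}-1)\Delta$; hence the negativity in Eq.~(\ref{eq:final_negativity}) is at most $(\sqrt{2}-1)\Delta$ for every state in $\mathcal{S}$ with the parameter $\Delta$. Finally I would verify that $\sigma_\Delta$ attains this value: for the state in Eq.~(\ref{eq:sigma_Delta}) one has $\lambda_1=\lambda_3=\Delta/2$ (so $x=\Delta$) and $\lambda_2=0$, whence the negativity equals $f(\Delta)=(\sqrt{2}-1)\Delta$. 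Therefore $\sigma_\Delta$ realizes the maximal amount of entanglement, as claimed.

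The main obstacle is essentially bookkeeping rather than genuine difficulty: the only nontrivial point is to confirm that the feasible interval for $x$ coincides \emph{exactly} with the domain assumed in Lemma~\ref{Lem:sigma_Delta}, which forces one to use both the normalization and the positivity constraint $\lambda_0^-\ge0$ in tandem, and not merely the inequalities of Theorem~\ref{Thm:lambdas}. Once the negativity is packaged into the form $\max\{f(x)-2\lambda_2,0\}$ on the correct interval, Lemma~\ref{Lem:sigma_Delta} performs all the remaining optimization and the theorem follows with no further calculation.
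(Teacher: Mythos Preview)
Your proposal is correct and follows essentially the same approach as the paper: both set $x=\lambda_1+\lambda_3$, bound $x\in[\Delta,(1-\Delta)/2]$, observe that the negativity in Eq.~(\ref{eq:final_negativity}) is at most $f(x)$ (because of the $-2\lambda_2$ term), and then apply Lemma~\ref{Lem:sigma_Delta}. Your write-up is simply more explicit than the paper's in deriving the upper bound on $x$ from the normalization together with $\lambda_0^-\ge0$.
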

\begin{proof}
For $\sigma_{abc}$ in $\mathcal{S}$, 
$\Delta\le \lambda_1+\lambda_3\le(1-\Delta)/2$.
It follows from Lemma~\ref{Lem:sigma_Delta}
that $f(\lambda_1+\lambda_3)\le f(\Delta)$.
Since the value in Eq.~(\ref{eq:final_negativity}) is 
less than or equal to $f(\lambda_1+\lambda_3)$ and 
$f(\Delta)$ is the amount of entanglement of $\T_c\left(\E_{bc}\left(\tau_{\Delta}\right)\right)$,
we complete the proof.
\end{proof}
Hence the maximal amount of entanglement
after applying the CVDC procedure with the local CPTP map $\E_{bc}$
is $(\sqrt{2}-1)/3$ when $\Delta=1/3$,
as in the probabilistic case.

In summary,
we have investigated a class of three-qubit separable states to distribute entanglement
by the CVDC procedure,
and have calculated the (maximal) amount of entanglement
which two particles of separable states in the class can have
after applying the procedure.

This work has been supported by
Basic Science Research Program
through the National Research Foundation of Korea (NRF)
funded by the Ministry of Education, Science and Technology (Grant No.~2009-0076578).


\end{document}